\documentclass[pdf-latex]{sn-jnl}

\usepackage{adjustbox}

\usepackage{amsmath}
\usepackage{amsthm} 
\usepackage{amssymb}
\newtheorem{theorem}{Theorem}

\newtheorem{prop}{Proposition}[theorem]

\usepackage[style=numeric,backend=biber]{biblatex}
\addbibresource{papers.bib}

\begin{document}

\title[Article Title]{Can solvents tie knots? Helical folds of biopolymers in liquid environments.}

\author*[1]{\fnm{Rhoslyn} \sur{Coles}}\email{rhoslyn.coles@mathematik.tu-chemnitz.de}

\author*[2]{\fnm{Myfanwy E.} \sur{Evans}}\email{evans@uni-potsdam.de}

\affil[1]{\orgdiv{Fakult\"at f\"ur Mathematik}, \orgname{Technische Universit\"at Chemnitz}, \orgaddress{\city{Chemnitz}, \postcode{09107}, \country{Germany}}}

\affil[2]{\orgname{University of Potsdam}, \orgdiv{Institute for Mathematics},  \city{Potsdam}, \postcode{14476}, \country{Germany}}

\abstract{
Helices are the quintessential geometric motif of the microscale, from {$\alpha$--helices} in proteins to double helices in DNA. 
Assembly of the helical biopolymers is a foundational step in a hierarchy of structure that leads to biological activity.  
By simulating folding in a simplified setting we probe the role of the solvent in the collaborative processes governing biomaterials. 
Using a simulation technique based on the morphometric approach to solvation, we performed computer experiments in which a short, flexible tube---modelling a biopolymer in an aqueous environment---folds solely based on the interaction of the tube with the solvent.  
Our findings reveal a variety of helical structures that assemble depending on solvent conditions, including overhand knots and symmetric double helices.
By differentiating the role of solvation, our work illuminates the environment of all soluble biomolecules, demonstrating that the solvent can drive fundamental rearrangements, even up to tying a simple overhand knot.}

\keywords{folding, helix, solvation, helical folds, geometric simulation, knots, biopolymers}

\maketitle

\section{Introduction}\label{sec1}

Solvation, the physics of molecules in fluids, is the collection of spontaneous processes resulting in the energetically favourable (re)arrangement of the molecule within the fluid. 
Solvation is one of the key mechanisms through which the aqueous environment of soluble biomolecules, such as proteins and nucleic acids, affects their structure, stability and functioning~\cite{LEV06, BAL08, PRI17}.
At a fundamental level, the solvent mediates between chemistry and structure, as a chaperon to other more concrete processes. 
The capacity of the solvent \emph{by itself} receives little attention; the water surrounding is arguably the most challenging part of the modelling of a soluble biomolecule and therefore the development of accurate yet computationally efficient models is of central importance~\cite{ONU17}.
Details tend to be on the atomic level, specific to a particular protein, and do not address the behaviour of a general soluble polymer in a general solvent.

Biopolymer chains wind themselves into a plethora of shapes in nature. 
The optimal $\alpha$-helix and $\beta$-sheet forms are the fundamental motifs found in protein structures. 
DNA, itself a double helix, can exhibit a variety of geometric forms, which helps to expose certain base pairs along the strand \cite{IRO15}.
Knotted configurations are found in biopolymers in wide variety of settings, where their geometric form is related to their functionality \cite{MAR09,SHA11}. 
The broader zoo of optimal shapes of short flexible biopolymers (see Fig.~\ref{fig:intro}), and their contribution to biological function, is a rich field of research. 
From this broader perspective, we investigate how a fluid environment may affect the shape of a short tube--like string. 
Understanding form through experimentation with simple geometric objects under physically motivated constraints provides interesting insights; our optimal forms contain helical motifs known for their optimal packing upon confinement~\cite{MAR00, POL08}, and our results establish the thermodynamic stability of the overhand knot and double helix in solution.

Modelling the effect of the solvent on the structure of large (polyatomic) biomolecules like globular proteins is challenging: A protein's configuration is governed by a fine balance between intramolecular bonding energy and the free energy of solvation of the fluid system~\cite{DOB03, SAL03}.
Whilst including the solvent explicitly in state models may enable a more complete picture of the solvent--solute interaction, the additional computational cost means there is little possibility to explore complex solute geometries let alone solvent induced shape change~\cite{LEV06}.
These difficulties motivate the development of implicit solvent models, which treat the solvent as a continuous medium, with application in molecular dynamics simulations to efficiently simulate biomolecules like proteins and nucleic acids in solution \cite{ONU19, ROU99}.
Implicit solvent models make use of the geometric properties of the space occupied by the solute within the liquid in order to compute the free energy of the liquid~\cite{EDE05}.
Of interest therefore is the space-filling representation of a molecule or solute expanded by the solvent's radius, known as the \textit{solvent-accessible surface} \cite{LEE71}. 
Early solvation free energy models were often based on the volume excluded by this surface which has an entropic cost to the energy~\cite{SIM94,LUM99}, as well as the surface area~\cite{BAN07, OOI87,EIS86}.

\begin{figure}[h!]
\centering
\includegraphics[width=0.65\linewidth]{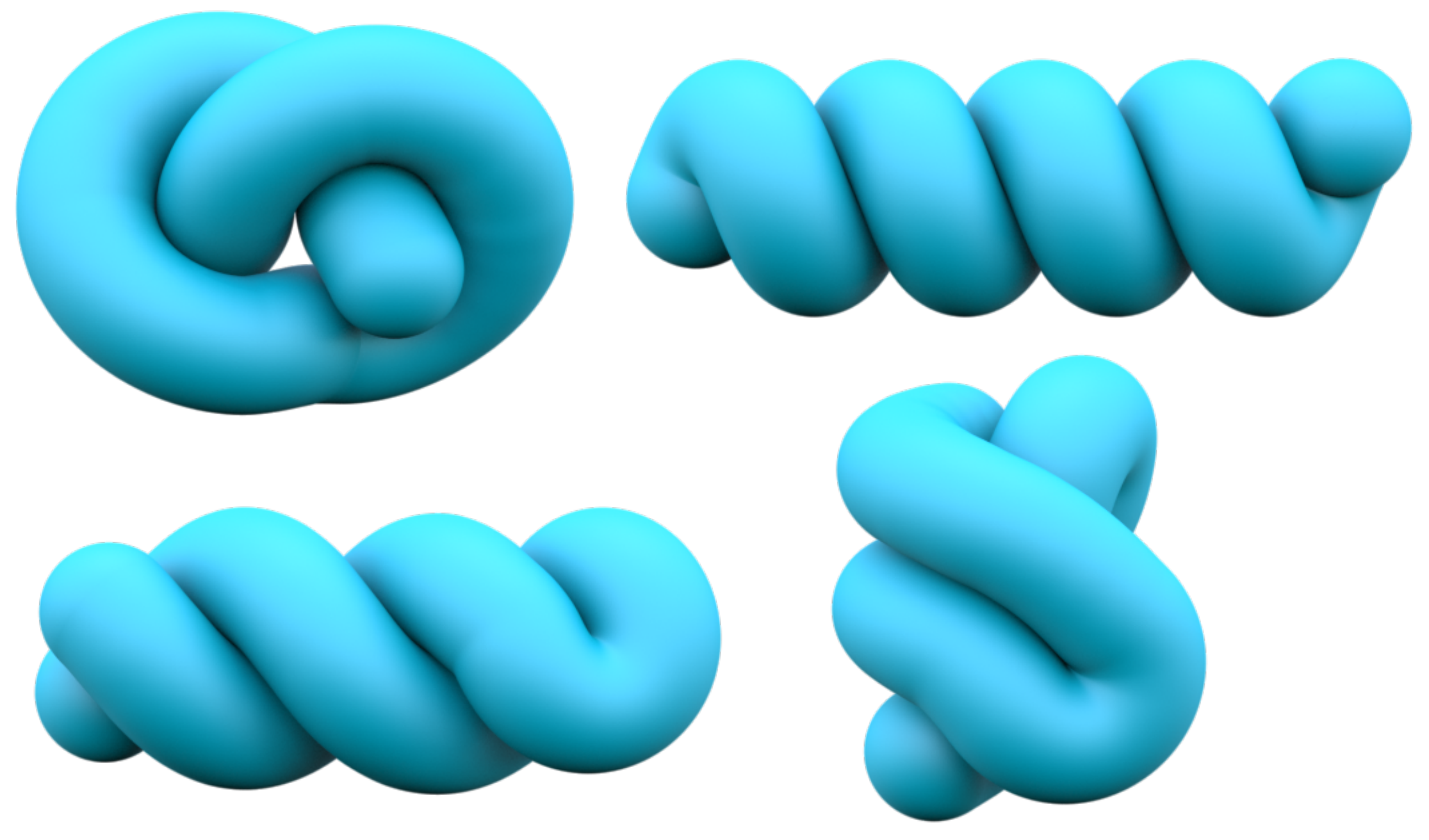}
\caption{Helical forms of a flexible tube, relevant to biopolymer foldings. Clockwise from top left, an overhand knot, the optimal helix, a compact structure with curve motifs arranged in parallel, and a double helix folded back on itself.}
\label{fig:intro}
\end{figure}

This link between geometry and the thermodynamics of fluids was made more precise through the development of the morphometric approach to modelling solvation~\cite{MEC96, ROT06, KOE04}. In this approach the solvation free energy, $\mathbf{F}_{\mathrm{sol}}$ in~\eqref{eq:free_energy_solvation}, is given as a linear sum of the basic rigid motion invariant valuations, the volume $\mathrm{V}$, surface area $\mathrm{A}$ and two further measures of curvature $\mathrm{C}$ and $\mathrm{X}$, of the body bounded by the solvent accessible surface ($\mathcal{B}$).
The thermodynamic coefficients coupling to the geometric measures are $\mathrm{p}>0$ the fluid pressure, $\sigma<0$ the planar surface tension and $\kappa$, $\bar{\kappa} \in \mathbb{R}$ for which there is no physical interpretation~\cite{KOE04, ROT06, HAD57}.
\begin{equation}\label{eq:free_energy_solvation}
\mathbf{F}_{\mathrm{sol}} = \mathrm{p} \mathrm{V}(\mathcal{B}) + \sigma  \mathrm{A}(\mathcal{B}) + \kappa \mathrm{C}(\mathcal{B}) + \bar{\kappa}\mathrm{X}(\mathcal{B}) \,.
\end{equation}
These four geometric functions, the \emph{measures of curvature}, appear as a novel application of Hadwiger's characterisation theorem of integral geometry~\cite{HAD57}. 
The measures $\mathrm{C}$ and $\mathrm{X}$ are computed as the mean and Gaussian curvatures integrated over the boundary of $\mathcal{B}$ provided sufficient regularity allows an interpretation of these curvature functions~\cite{ZAE86}. 
By the Gauss--Bonnet theorem $\mathrm{X} = 4\pi\chi$ where $\chi$ is the Euler characteristic.

The real advantage of this approach is that it (de)couples physics and geometry in a computationally convenient manner.
The thermodynamic coefficients $\mathrm{p}$, $\sigma$, $\kappa$ and $\bar{\kappa}$ depend only the physical properties of the fluid, like temperature and the chemical coupling between the solvent and solute, \emph{not} on the shape of the solute. 
These can be determined by fitting free energy values as obtained from~\eqref{eq:free_energy_solvation} to those computed from solvent models of the statistical--mechanical theory, as tested with simple solute geometries. Once the coefficients are given, the free energy is evaluated from the geometry of $\mathcal{B}$. 
If $\mathcal{B}$ is modelled as a union of balls, results from computational topology lead to the development of fast algorithms which evaluate $\mathrm{V}(\mathcal{B})$, $\mathrm{A}(\mathcal{B})$, $\mathrm{C}(\mathcal{B})$ and $\mathrm{X}(\mathcal{B})$ exactly and efficiently, fairly independently of the shape~\cite{EDE95, EDE05}.
For fluid systems like that of a protein within the aqueous environment of the cell, the morphometric approach can compute free energy values of complex solute geometries in excellent agreement with the classical theory at a fraction of the computational cost, making a geometric centered investigation of solvation even feasible~\cite{ROT06}.

Previously the morphometic approach was used to demonstrate that different solvent environments favoured different helical tubular solutes~\cite{HAN07}. 
This was shown by comparing the solvation free energy between a collection of \emph{tight} periodic helical tubes, winding such that each successive turn sits on the previous~\cite{PRZ01}.
Energetically favourable configurations included the $\alpha$--helix, slightly unwound helices, resembling topologically an open infinite cylinder, and stacked parallel curves representing the infinite $\beta$--sheet structure.
Our work is a considerable extension of this study by allowing the tubular solute to freely fold \emph{without a priori} assuming helical like curve arrangements. 
When considering finite strings the preference for tight single helical structures is challenged by our findings here.
The morphometric approach has also been used in a variety of settings to analyse minimising configurations~\cite{ROT06, POL08, HAR13,EVA14}, as well as for the geometric simulation of hard sphere clusters in fluids, where helical stacks of sphere were found under particular fluid conditions \cite{SPI24}. 

In this study, we will use the morphometric approach to solvation as a basis for simulating the folding of a finite flexible tube, modelling a biopolymer in aqueous environments, according to the interaction of the tube with the solvent alone. We detail the results of these simulations below.

\subsection{Helical folds in solvation simulations}\label{sec:results}

We simulated the folding of energetically favourable configurations of a (unit radius) tube of length $\ell = 25$ within a range of fluid conditions.
This was achieved through computer experiments which optimise the shape of an open equilateral polygonal curve of fixed length, modelling the shape of the solute, according to the morphometric description of the free energy of solvation~\eqref{eq:free_energy_solvation}.
The free energy is minimised via the method of simulated annealing: the curve shape is incrementally improved using a crank-shaft move while ensuring that the flexible tube does not intersect itself (see methods for details).
The assembly of a double helical configuration and overhand knot via this computation method is shown in Fig.~\ref{fig:deformation_sequence} and Fig.~\ref{fig:deformation_sequence_tied}.
The simulation is initialised in the fully solvated state and progresses to fold. 
First the tube comes into self--contact, causing the solvent--accessible surface to self--intersect thereby decreasing the volume and, as the leading order term in~\eqref{eq:free_energy_solvation}, thus the energy.
The shape then advances as controlled by the specific linear combination of the measures given in~\eqref{eq:free_energy_solvation} i.e. by the fluid environment.
The deformation shown in Fig.~\ref{fig:deformation_sequence} appears to be the most direct and therefore fastest mechanism with which the curve adopts configuration $\mathbf{B}$.
The configuration may initially fold differently (if for example the kinked structure shown in Fig.~\ref{fig:deformation_sequence} A is not approximately folding the curve in half) leading to a double helical curve with one length shorter than the other. 
Such versions are then iteratively improved via successive rounds of heating and cooling.
On the other hand, whilst the deformation shown in Fig.~\ref{fig:deformation_sequence_tied} is typical, there are many ways in which the curve may fold into shape $\mathbf{C}$.
The number of iterations steps needed to fold the structure depends upon the fluid conditions and the folding shape, as well as the specific implementation of the algorithm i.e. the expected distance points are moved between iterations and the tempering scheme employed.
An approximate upper bound of $10^{6}$ steps is required to fold the curve into a low energy configuration.
\begin{figure}[htbp!]
\centering
\adjustbox{valign=t}{\begin{minipage}{0.6\textwidth} 
    \includegraphics[width=\linewidth]{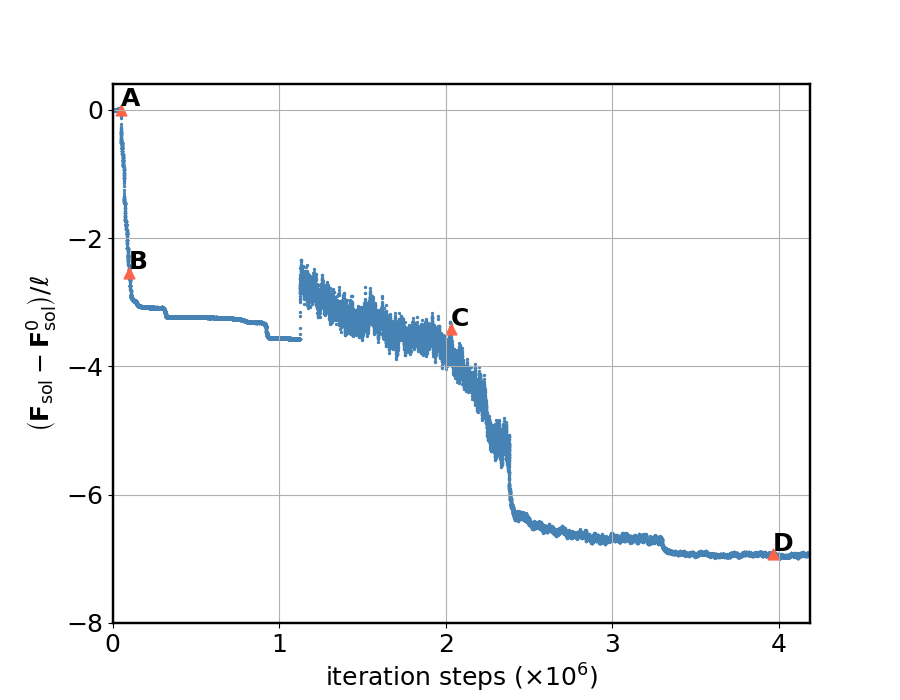}
\end{minipage}}
\hfill
\adjustbox{valign=t}{\begin{minipage}{0.38\textwidth}
    \vspace{7.5mm}
    \includegraphics[width=\linewidth]{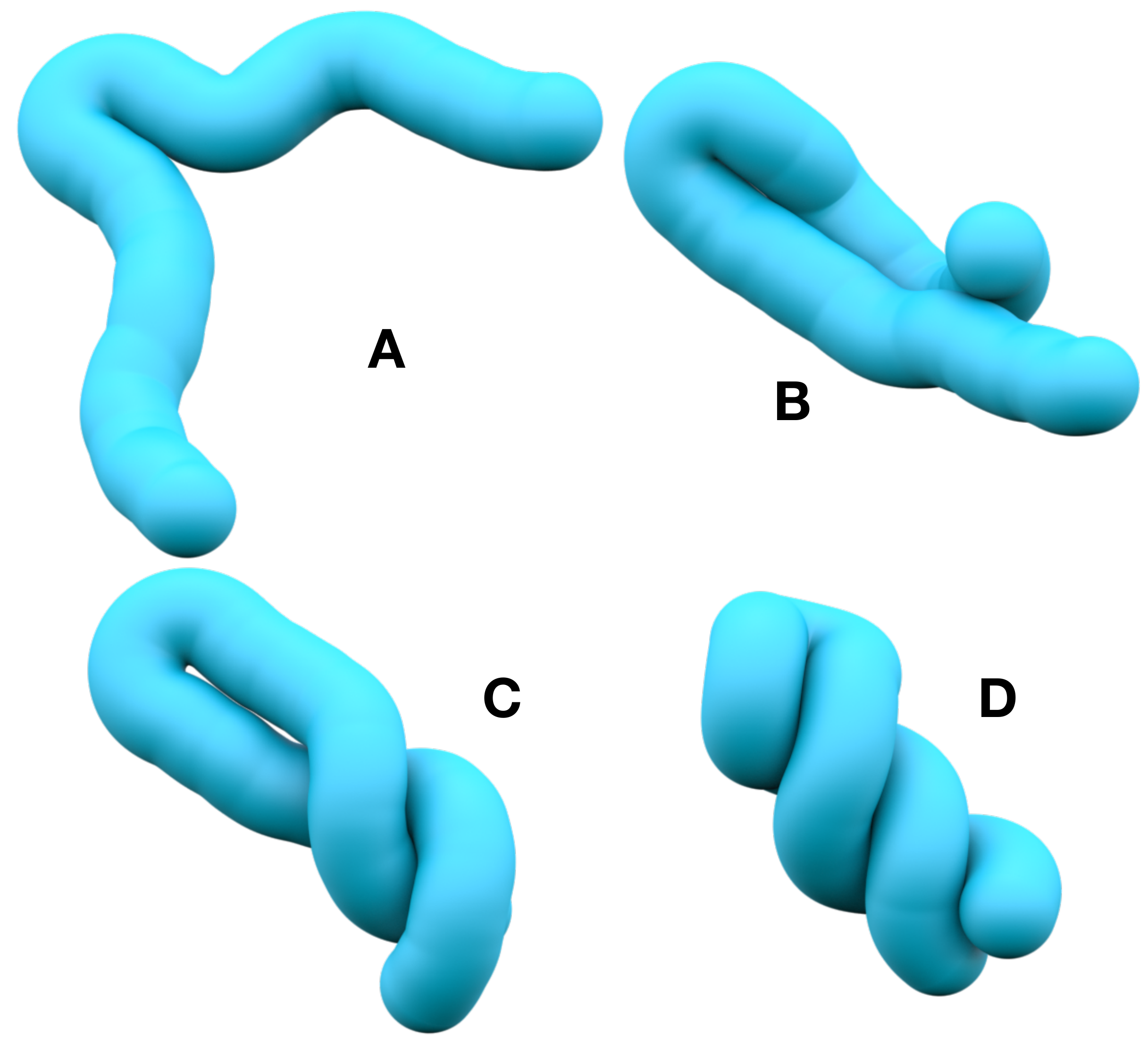}
    \vfill
\end{minipage}}
\caption{Simulated folding of double helical geometry. The solvation free energy of a tube during a simulation run, in units $(\mathbf{F}_{\mathrm{sol}}-\mathbf{F}_{\mathrm{sol}}^{0})/\ell$ (energy per unit length relative to the \emph{fully solvated} state). In configurations \textbf{A} and \textbf{B} the tube comes into self--contact causing the solvent--accessible surface (not shown) to self--intersect thereby decreasing the volume and, as the leading order term in~\eqref{eq:free_energy_solvation}, thereby the energy. 
In configurations \textbf{C} and \textbf{D} the shape assembles as controlled by the specific linear combination of the geometric measures. 
The packing fraction is $\eta = 0.375$ and the solvent radius 
}
\label{fig:deformation_sequence}
\end{figure}

\begin{figure}[htbp!]
\centering
\adjustbox{valign=t}{\begin{minipage}{0.6\textwidth} 
    \includegraphics[width=\linewidth]{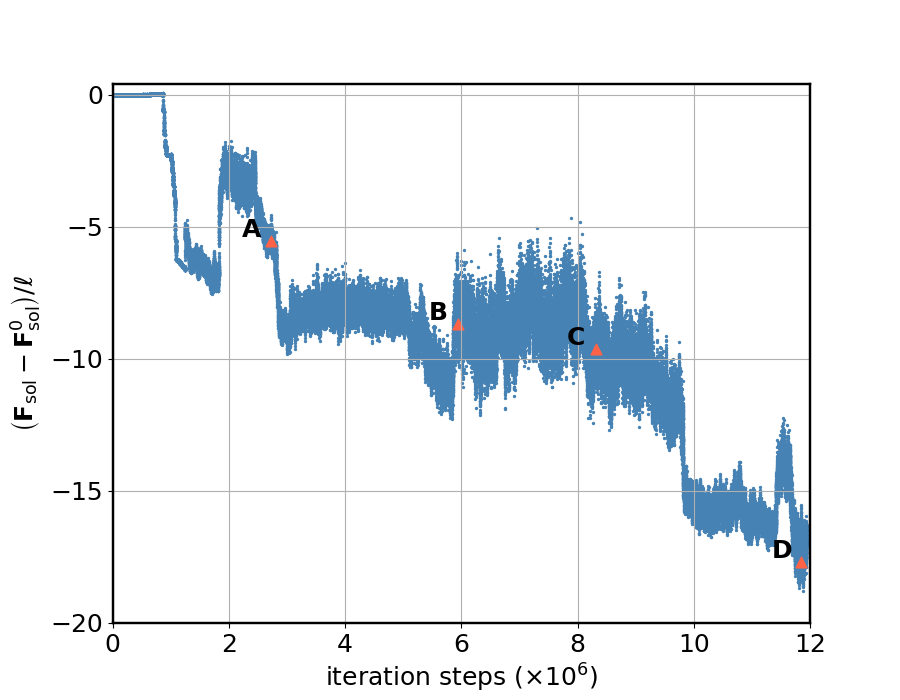}
\end{minipage}}
\hfill
\adjustbox{valign=t}{\begin{minipage}{0.38\textwidth}
    \vspace{10mm}
    \includegraphics[width=0.9\linewidth]{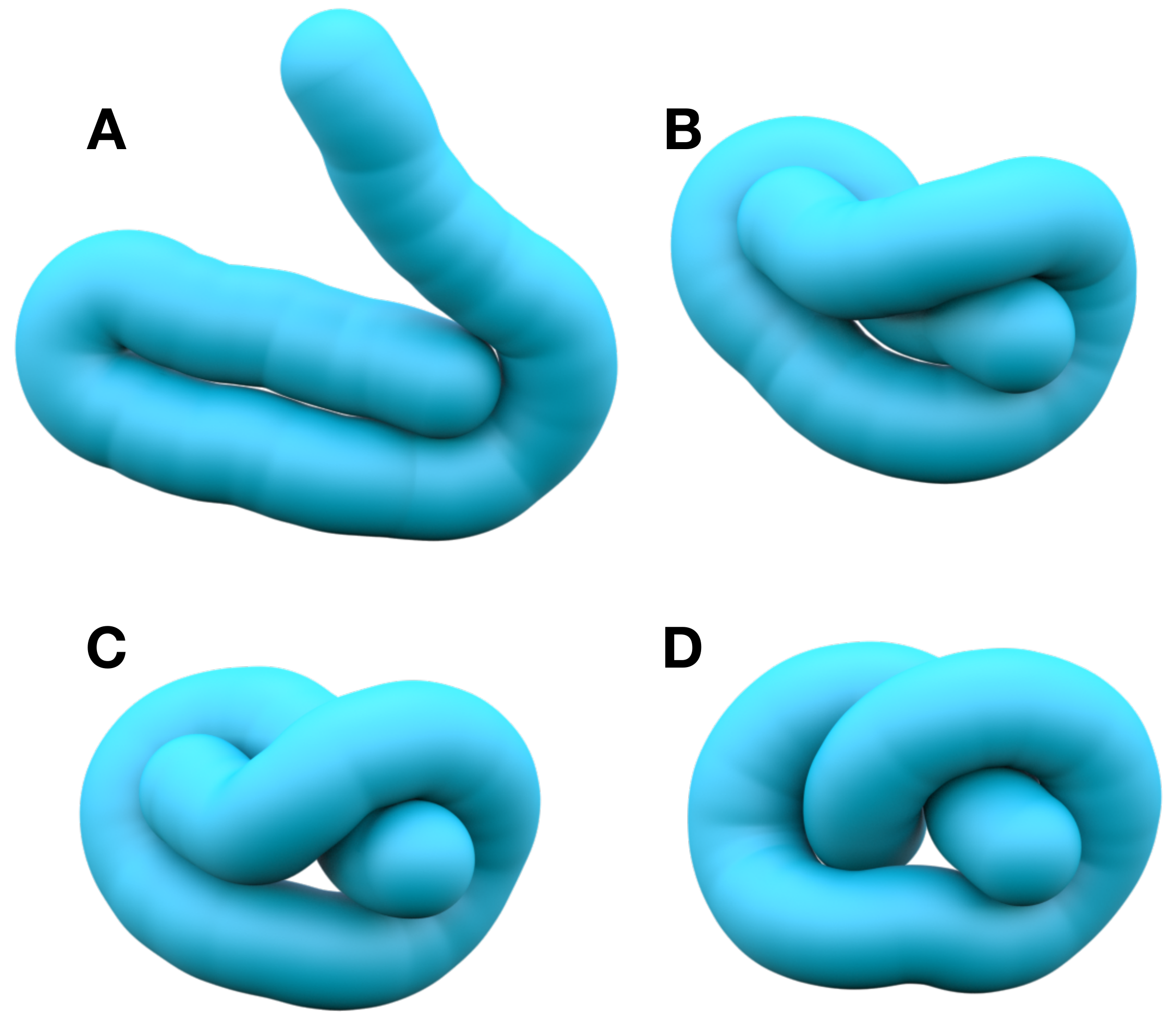}
    \vfill
\end{minipage}}
\caption{Simulated folding of overhand knot geometry. The solvation free energy of a tube during a simulation run, in units $(\mathbf{F}_{\mathrm{sol}}-\mathbf{F}_{\mathrm{sol}}^{0})/\ell$ (energy per unit length relative to the \emph{fully solvated} state). In configurations \textbf{A} and \textbf{B} the tube comes into self--contact causing the solvent--accessible surface (not shown) to self--intersect thereby decreasing the volume and, as the leading order term in~\eqref{eq:free_energy_solvation}, thereby the energy. 
In configurations \textbf{C} and \textbf{D} the shape assembles as controlled by the specific linear combination of the geometric measures. 
The packing fraction is $\eta = 0.35$ and the solvent radius 
}
\label{fig:deformation_sequence_tied}
\end{figure}

Modelling the solvent as a hard sphere fluid, the physical coefficients in~\eqref{eq:free_energy_solvation} may be derived explicitly as functional expressions of the solvent packing fraction $\eta$ and solvent radius $r_{s}$, providing a way to systematically compare favorable solute geometries in different hard sphere fluid environments~\cite{HAN06b}.
The range of fluid environments, $0.02 \leq r_{s} <0.2$ and $\eta \in (0, 0.494)$ serves as a reference model for the fluid environment of a cell under physiological conditions~\cite{ROT06}. 
The landscape of optimal configurations corresponding to such fluids are shown in the phase diagram in Fig.~\ref{fig:openChain_dl25_25_phaseDiagram}. 
A square in the diagram corresponds to coordinates $(\eta, r_{s})$ defining a particular fluid environment, the colouring gives information of the configuration of lowest energy. 
The diagram was constructed using empirical classification, where structures are grouped by key distinguishing features, despite their geometries differing slightly. 
For example, a double helix along a straight axis is grouped together with one that is slightly bent, as their distinguishing feature is their double helical character. 
Additional simulations using an input of various favourable configurations under differing fluid conditions were performed to interpolate between initial simulations in the phase diagram. 
The final results of the phase diagram were enhanced by calculations of $\mathbf{F}_{\mathrm{sol}}$ for the collection of tight helical tubes known for their high degree of thermal stability~\cite{HAN07}.

\begin{figure}[h!]
\centering
\includegraphics[width=\linewidth]{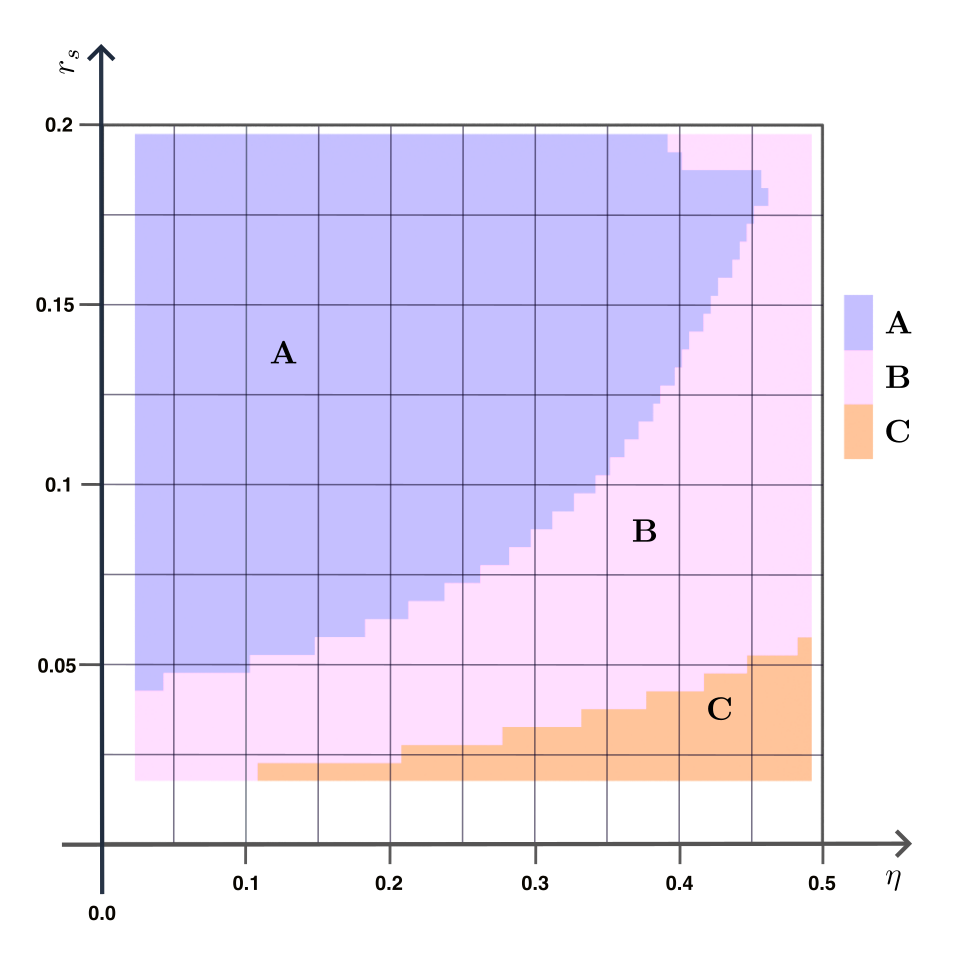}
\includegraphics[width=0.9\linewidth]{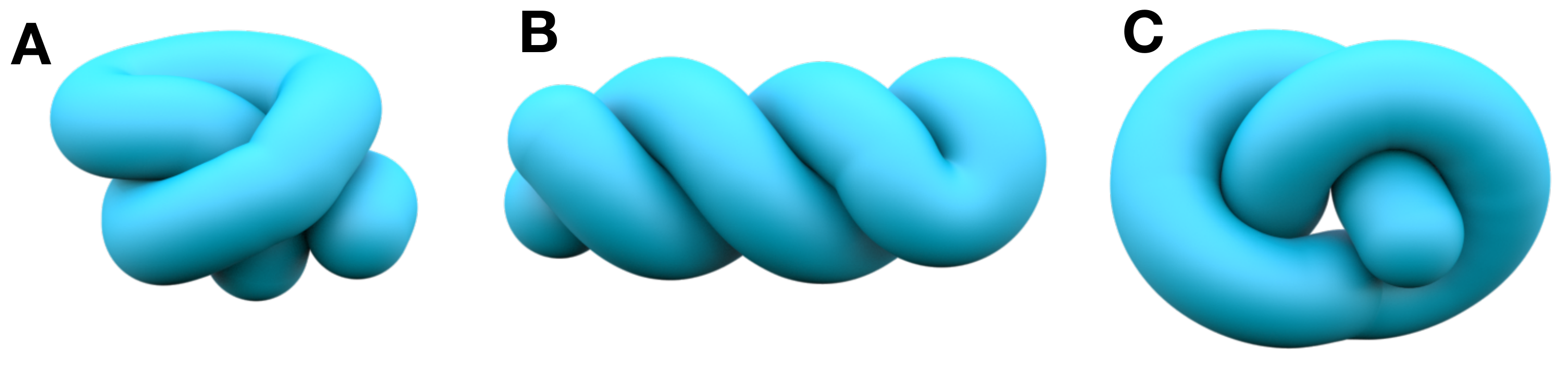}
\caption{Phase diagram of optimal string geometries over the space of fluid properties. 
The fluid is defined by the solvent packing fraction $\eta$ and the relative size of the solvent given by the solvent radius $r_s$~\cite{HAN06b}.
The diagram splits into three regions, each with a different minimising geometry. 
These three geometries are shown below the diagram, and are called configurations \textbf{A}, \textbf{B} (a double helix), and \textbf{C} (an overhand knot).}
\label{fig:openChain_dl25_25_phaseDiagram}
\end{figure}

We observe three main regions of the phase diagram, each with a given optimal form. The three configurations (shown in Fig.~\ref{fig:openChain_dl25_25_phaseDiagram}) consist of:
\begin{itemize}
    \item Configuration $\mathbf{A}$: broadly represents a compact geometry, a typical structure of low energy (pictured) forms a basic over and under crossing, the rest of the length of the curve winds around akin to a single helix doubling back on itself. Appears at medium to large solvent radius across all packing fractions.
    \item Configuration $\mathbf{B}$: a double helix, where the string folds back on itself. Occurs for smaller solvent radius, and is more typical for higher packing fractions.
    \item Configuration $\mathbf{C}$: overhand knot. Appears where the solvent radius is small across all packing fractions. 
\end{itemize}

The phase diagram contains previously unseen optimal curve shapes challenging the established idea that the $\alpha$-helix and $\beta$-sheet are the most energetically favourable shapes among biopolymers from the persepctive of solvation free energy~\cite{SNI07,HAN07}. 
The appearance of the overhand knot as a stable solvation free energy minimiser could provide a basis for the existence of knotted configurations in biopolymers.

To explore the relative energies of these structures consider two cross-sections through the phase diagram, at $r_s=0.04$ and $r_s=0.125$, shown in Fig.~\ref{fig:energy_plots_all_shapes}.
The energies are plotted against the fluid packing fraction $\eta$ for example shapes of the groups $\mathbf{A}$, $\mathbf{B}$ and $\mathbf{C}$, including two additional configurations, the open tight helical curve of lowest energy within the collection of all tight helical curves, and a low energy compact shape without otherwise recognizable structure (arising from the computer experiments).
In both plots for low packing fractions all of the structures have comparable free energy values, whereas for larger packing fractions, the structures differentiate from each other significantly. 
This suggests that solvation forces become increasingly relevant to folding dynamics as the solvent becomes denser and the size of the solute larger. 
This places particular importance on the configurations $\mathbf{B}$ and $\mathbf{C}$, the double helix and the overhand knot shape, which have significantly lower energy than other structures in the region where they are most favorable. These plots also show that the energy values of tight (single) helices are well above those of the minimising configurations identified here. 
A full phase diagram of tight helical configurations, periodic and finite curves of the same length as the solute geometries considered here, are given in the supplementary materials. 
Single helices assemble in experiments in the region A for packing fraction $\eta \lessapprox 0.15$ with energy values close to those curves of lowest energy, in agreement with the energy data shown in Fig.~\ref{fig:energy_plots_all_shapes} (Bottom).
We note that the free energy of a $\beta$-sheet configuration for a finite string is significantly higher than these values, and thus not important for our study.
Finally for the fluid conditions considered in this work the fully solvated state is not a favourable configuration so that the solvent is, to a certain degree, shape determining.

\begin{figure}[h!]
\centering
\includegraphics[width=0.8\linewidth]{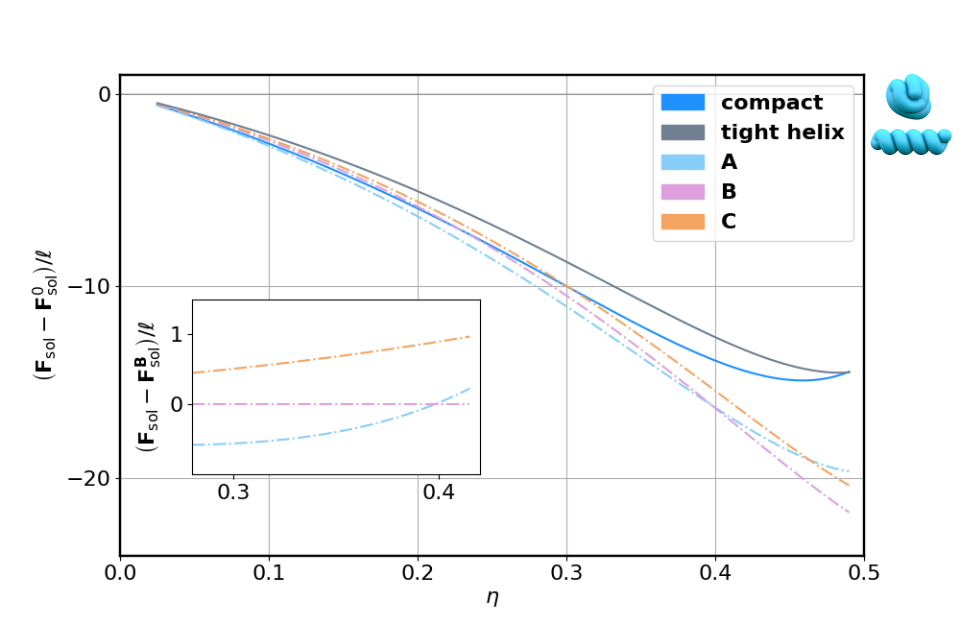}
\includegraphics[width=0.8\linewidth]{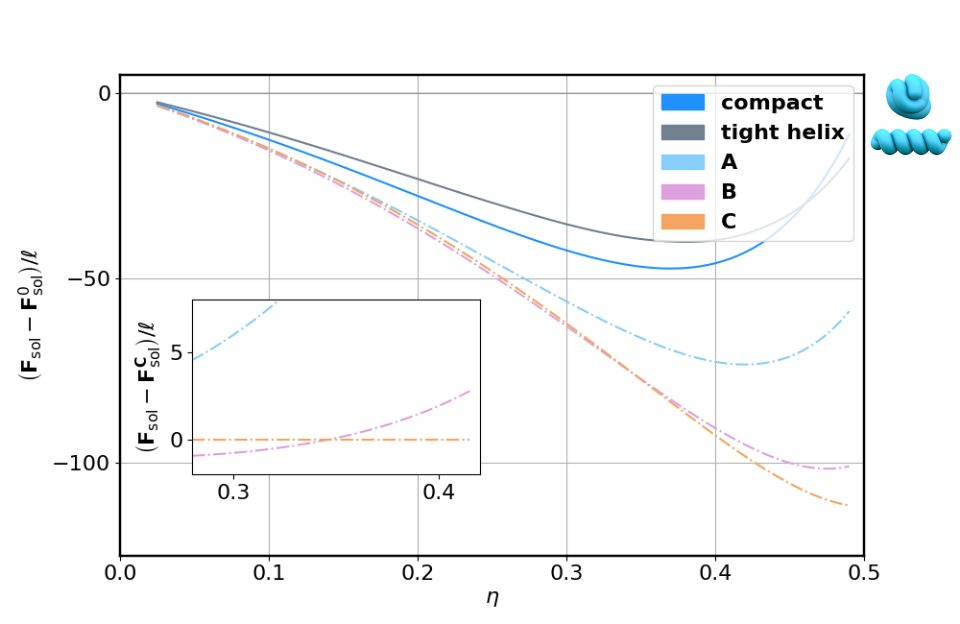}
\caption{Free energy of solvation profiles for fixed solvent radius and increasing solvent packing fraction.
Plotted is the free energy~\eqref{eq:free_energy_solvation} against the packing fraction $\eta$ for solvent radius $r_{s} = 0.125$ (Top) and $r_{s} = 0.04$ (Bottom) in units $(\mathbf{F}_{\mathrm{sol}}-\mathbf{F}_{\mathrm{sol}}^{0})/\ell$ (energy per unit length relative to the \emph{fully solvated} state).
Configurations $\mathbf{A}$, $\mathbf{B}$, $\mathbf{C}$, the tight helix of lowest energy (including the optimal helix pictured), and a generic compact shape (pictured) are compared.
Energy differences between all structures for low $\eta$ are marginal.
(Top) The dominant curve shape for low $\eta$ is $\mathbf{A}$ then $\mathbf{B}$ (double helix) as $\eta$ increases.
Inset depicts the energy of $\mathbf{A}$, $\mathbf{B}$ and $\mathbf{C}$ normalised with respect $\mathbf{B}$.
(Bottom) The dominant curve shape is $\mathbf{B}$ (double helix) for low $\eta$ and $\mathbf{C}$ (overhand knot) for $\eta \gtrapprox 0.35$. 
As $\eta \gtrapprox 0.2$ increases the shapes $\mathbf{B}$ and $\mathbf{C}$ are of significantly lower energy as compared to the other configurations.
Inset depicts the energy of $\mathbf{A}$, $\mathbf{B}$ and $\mathbf{C}$ normalised with respect to $\mathbf{C}$.}
\label{fig:energy_plots_all_shapes}
\end{figure}

\subsection{Discussion}
In this work we utilise the morphometric approach to investigate thermodynamically favourable configurations of a tubular string solute in a hard sphere solvent, as a test case of a short biopolymer living in a fluid environment of the cell under physiological conditions.
Our computer experiments demonstrate the folding of the solute into a variety of helical forms using only the interaction with the solvent. 
We find that the overhand knot and double helix configurations are the most stable structures in the large region of the phase diagram where solvation plays the most significant role, proving more stable than the optimal helix seen in protein $\alpha$-helices.

For small solutes in low density fluids (Fig.~\ref{fig:energy_plots_all_shapes} (top)) there is little energy difference between generic compact curve configurations and tight helical geometries like the open $\alpha$--helix configuration. 
This implies that hard sphere fluids of large solvent radius and low packing fraction make minimal interference with respect to the solute configuration.
In biomaterials like proteins, the energetic stability of the $\alpha$--helix comes from both the solvation free energy and hydrogen bonding of the backbone chain of protein molecule and interactions of the side chain atoms~\cite{KAL98}.
Our experiments show that for short strings, single helices are simply unfavourable in comparison to other geometries; this indicates that for short linear-chain molecules energetic gains of origin other than the solvent are the significant factor in determining the formation and prevalence of the $\alpha$-helix.
The fine balance of energies determining the native state of short linear--chain molecules should be different for structures similar to the double helical configuration $\mathbf{B}$ and overhand knot $\mathbf{C}$ and our study of the solubility of these structures provides a key insight into the properties of these geometries.  
In particular, the overhand knot appears as an optimal configuration for energies other than solvation, namely for stiffness regimes of semi--flexible polymers~\cite{MAR16, MAJ21},  and for the balance between bending energy and polymer entropy, where here the tube is viewed in the context of polymer tube theory as a tube containing entangled polymers~\cite{DAI20}.

The empirical classification of the optimal curve shapes becomes increasingly coherent with decreasing solvent radius and increasing packing fraction. 
Within this part of the diagram two optimal shapes are distinguished, a double helix and an overhand knot. 
There is no apparent shape variation among the overhand knot configurations observed in region $\mathbf{C}$. 
This shape, which is of significantly less energy than other configurations within region $\mathbf{C}$ has, to the best of our knowledge, not yet been recognised for its thermodynamic stability. 
The past years have seen extensive interest in the science of knotted proteins and biopolymers; both from the perspectives of design and function~\cite{ASH22, TEZ22, SUL20}. 
The observation that a short homogeneous tube will adopt a knotted structure based only on interactions with a solvent is an important step in understanding folding and self--assembly in aqueous environments. 
Because of the propensity for the tube to come into self--contact, these structures also relate to ideal knots, which are length minimising configurations of knotted loops. 
In our work we use the reach to constrain the curve trajectory such that the tube is at most touching but not overlapping, which is similar to computational simulation of ideal knots \cite{KAT96}.
In essence, the minimisation of length (or maximisation of radius) is a related idea that utilises geometric quantities as a measure of energy \cite{KAT96,STA96,GON99}, with a deep correlation to DNA electrophoresis \cite{STA96}, and consequences for mechanics \cite{JOH2021,GRA21}.

The double helical motif, an ubiquitous shape in biology, is reminiscent here of the plectoneme geometry adopted by supercoiled DNA~\cite{BRO18}. 
As with single helices these shapes are interesting from many different energetic perspectives~\cite{KAS10}. 
Our attention to solvation in the simple setting of our experiments confirms the thermodynamic stability of double helical geometry, in particular over other configurations.

The major challenge of this work is the computation time needed for experiments, with this hindering a similarly in depth study of longer solute geometries. 
The computation time is effected by both the solvent radius and the packing fraction with increased duration for low solvent radius and large packing fraction, in turn this effects the reliability of the results in the corresponding regions of the phase diagram (Fig.~\ref{fig:openChain_dl25_25_phaseDiagram}).
For $r_{s} < 0.04$ over all packing fractions experiments were initialised in configurations similar to the early experiment structures shown in Fig.~\ref{fig:deformation_sequence}.
Interestingly, beyond energy specific influences on the computation time,  some curve shapes seem to fold quicker than others, in particular the overhand knot shape $\mathbf{C}$ is readily reproduced over a much larger area of the phase diagram (Fig.~\ref{fig:openChain_dl25_25_phaseDiagram}) than indicated by the corresponding region. 
Thus far it is unclear if this is providing information of the energy landscape surrounding particular low energy configurations or of the folding pathways of such shapes.

An important consideration in this study is the string length. 
An investigation of the effect of length on the assembly process is important to realise, but is made difficult by the computational limitations discussed above.
Remarkably, whilst our empirical classification of energetically significant shapes is not technologically sophisticated, it was simple to implement.
It seems unlikely that an empirical classification can be made so easily for longer strings; the low energy configurations in region $\mathbf{A}$ of the diagram show quite some variety in shape even for the short strings considered in this work. 
Without a better understanding of low energy curve motifs, the analysis of results, even if they were readily available, may require a different mechanism of sorting than geometry alone.

By allowing the tubular geometry to freely form, our observations are a fundamental extension of previous studies of helical tubes in fluid evironments~\cite{HAN07}, or in crowded environments~\cite{SNI07}, demonstrating a much richer variety of thermodynamically stable structures.
In the limit case $\eta \rightarrow 0$ our results display similar motifs to the entropic packing of tubular strings~\cite{MAR00, POL08}, and by including all four measures in~\eqref{eq:free_energy_solvation} to define an accurate solvation model, our work may be seen as an essential progression of those studies.
Interestingly, our experimentally found optimal curve shapes bear a strong resemblance to the low energy helical forms adopted by a solvophobic tubular polymer in a solvent~\cite{POL08}.

Our findings should be considered in conjunction with previous investigations of energetically stable configurations of semi-flexible polymers~\cite{MAR16, WER17, MAJ21}.
These studies build upon a large body of work examining the collapse dynamics of homopolymers in a poor solvent e.g.~\cite{MON04, CHA01, POL02, MAJ17}.
The polymer, modelled as a chain of non--overlapping beads (monomers) connected by springs or inextensible sticks, folds according to a configurational energy given as an attractive and short--range repulsive potential between the monomers of the chain coupled with the bending energy of the chain. 
The flexibility or stiffness describes the propensity of the chain to bend. 
Broadly speaking, for certain flexibility regimes (depending on the specific polymer model, configuration energy and system temperature) the polymer will collapse into a globule structure or remain elongated.
Geometrically detailed studies demonstrate that for a given bending stiffness and system temperature a polymer may adopt a configuration similar to the favourable configuration $\mathbf{B}$ (double helix) and $\mathbf{C}$ (overhand knot) found in this work~\cite{WER17,MAJ21}.
Our study differentiates from these works in a fundamental way: Here the configuration energy is determined only from the free energy of solvation, i.e. from properties of the fluid environment, and \emph{not} properties of the polymer. 
Consequently we investigate favourble configurations of the \emph{same} polymer in \emph{different} fluids.
The tube model of the polymer employed in our study is devoid of configurational bias, neither flexible nor stiff. 
Nevertheless, the shape similarities between our findings and those of previous work are intriguing in what this suggests for the mechanisms governing the shape of tube–like polymers.
Our study demonstrates that the essence of flexibility or stiffness, as the preference of a given polymer for configuration $\mathbf{B}$ or $\mathbf{C}$ say, may be achieved by varying the properties of the fluid environment for the \emph{same} polymer. 
Given that many biological systems rely on the fluid environment to actuate their function, our study in effect says that a polymer can behave as flexible or stiff---it could really just depend on the fluid--- and in systems capable of shape rearrangement, our work sheds light on how this may be best achieved.

In summary, our geometry focused investigation provides examples of favourable biopolymer configurations via the optimisation of unconstrained flexible tubes, inaccessible within the framework of all--inclusive molecular dynamics simulations. 
By differentiating the role of solvation in biopolymer folding in this very general setting, our study helps illuminate the energetic background scenery in which all soluble biomolecules live.

\subsection{Methods}

We investigated short tubular solutes in water-like environments by way of computer simulation. 
The simulation tool functions as a geometric optimisation algorithm which modifies iteratively the shape of the solute body thereby decreasing the solvation free energy of the system.
Let the physical parameters, solvent radius $r_{s}$ and coefficients $\mathrm{p}$, $\sigma$, $\kappa$ and $\bar{\kappa}$, describing the thermodynamic interaction between the solute and solvent be given. 
The free energy of the system is computed using equation~\eqref{eq:free_energy_solvation} with the body 
\begin{equation*}
    \mathcal{B} = \bigcup_{i=1}^{n} \mathrm{B}_{r_{t} + r_{s}}(v_{i})
\end{equation*}
for vertices $\{v_{1}, \dots, v_{n}\}$ arranged linearly on an open equilateral polygon with edge length $e$, ball radius $r_{t}>\tfrac{e}{2}$ and solvent radius $r_{s}>0$.
Here $n = 101$, $e = 0.25$ and $r_{t} = 1.00778$ so that the scale--invariant length $\ell = \tfrac{25}{r_{t}} = 24.8065$, these numbers are set for comparison with other solute geometries of finer discretisation.
The vertex set $\{v_{1}, \dots, v_{n}\}$ is synonymous with the open equilateral polygon and we refer to both as simply the curve.
Our algorithm evaluates approximate solutions to the following problem:
What is the shape of a curve, representing a \emph{physical (re)arrangement} of the body $\mathcal{B}$, which minimise the free energy of solvation?

A physical (re)arrangement of the solute body is if the $r_{t}$--balls centered on the vertices of the curve curve overlap only \emph{along} the curve; such that the solute appears as a self--avoiding tube.
We characterise this mathematically using the following property~\cite{NAI92}; a curve satisfies the \emph{simple--tube property for the ball radius $r$} if for any pair $(i, j)$, $1 \leq j \leq n$ 
\begin{equation}\label{cond:simple_tube_property}
    \mathrm{B}_{r}(v_{i}) \cap \mathrm{B}_{r}(v_{j}) \subset \mathrm{B}_{r}(v_{k}) \,\, k = i+1, \dots, j -1 \, .
\end{equation}
If a curve satisfies the simple tube property for a ball radius $r$ then the four geometric measures; volume $\mathrm{V}$, surface area $\mathrm{A}$ etc. included in equation~\eqref{eq:free_energy_solvation}, are constant, depending only on the geometric parameters $n$, $e$ and $r$, \emph{independent} of the specific shape of the curve.
This is an elementary computation following from property~\eqref{cond:simple_tube_property} (see supplementary materials).
In particular, the volume of a solute body modelled by a curve satisfying property~\eqref{cond:simple_tube_property} for the radius $r_{t}$ will be constant and independent of the actual shape of the solute. 
As we are interested in energetically driven shape (re)arrangement of the solute body, we say a curve models a physical (re)arrangement of the (same) solute if the curve (fixed by the parameters $n$, $e$) satisfies the simple--tube property for the radius $r_{t}$ and restrict our attention to minimising the energy on this set.
Note that the simple--tube property models the thickness of the tube as defined by the global radius of curvature function or the reach were such functions restricted to the vertex set of the polygonal curve~\cite{GON99, CAN02, RAW03}.

The simple--tube property distinguishes an important class of solute configurations: a configuration is called \emph{fully solvated} if the curve satisfies the simple--tube property for the radius $(r_{t} + r_{s})$.
Since in this case, all four geometric measures terms of equation~\eqref{eq:free_energy_solvation} are constant and independent of the specific curve shape, the free energy of any fully solvated configuration is constant.
This energy, denoted as $\mathbf{F}_{\mathrm{sol}}^{0}$, is the zero from which we measure geometry with respect to energy, and is subtracted from the computed energy values as shown in the graphs in Fig.~\ref{fig:deformation_sequence} and in Fig.~\ref{fig:energy_plots_all_shapes}.
The simple--tube property is a discrete analog of the self--contact condition used to derive the collection of tight helical curves used as a basic test geometry in the investigation of favourable solute geometry with respect to solvation~\cite{PRZ01, GON99}.
Importantly the property establishes that our interest is in curves satisfying the simple--tube property for the radius $r_{t}$ but not for the radius $(r_{t} + r_{s})$.

The algorithm is based on the technique of parallel simulated annealing~\cite{LOU16}.
Once the energetic and geometric parameters as described above are defined, the algorithm is initialised with a chosen curve shape.
A single iteration generates a random curve close in shape to the current curve and decides whether to accept this new curve as the input of the next iteration.
Acceptance is decided by evaluating the energy difference between the current and new curves ($\Delta \mathbf{F}_{\mathrm{sol}} = \mathbf{F}_{\mathrm{sol}}^{\mathrm{new}} - \mathbf{F}_{\mathrm{sol}}^{\mathrm{cur}}$) and applying the metropolis criterium: If the new curve is of less energy it is accepted, otherwise it is accepted with probability $\mathsf{p}_{\lambda} = \exp{\left(\dfrac{-\Delta \mathbf{F}_{\mathrm{sol}}}{\lambda}\right)}$ for $\lambda \in \mathbb{R}_{>0}$.
The algorithm computes iterations of $m \approx 20 $ curves in parallel with constant $\lambda$ for a time interval typically $\approx 20$ hours.
During the simulation the parameter $\lambda$ generally decreases, this decreases the likelihood that configurations which increase the energy of the system are accepted.
Between intervals parallel systems may be duplicated or discontinued by mixing the states between the $m$ processors according to the probability
\begin{equation*}
\dfrac{\mathsf{p}_{\lambda}(j)}{\sum_{j=1}^{m} \mathsf{p}_{\lambda}(j)}
\end{equation*}
where
\begin{equation*}
\mathsf{p}_{\lambda}(j) = \exp{\left(\dfrac{\mathbf{F}_{\mathrm{sol}}^{j} - \mathbf{F}_{\mathrm{sol}}^{0}}{\lambda}\right)}
\end{equation*}
and $\mathbf{F}_{\mathrm{sol}}^{j}$ is the free energy of the curve of the $j$\textsuperscript{th} processor. 
The energy is computed using the POWERSASA software which evaluates the measures of~\eqref{eq:free_energy_solvation} exactly with state of the art efficiency~\cite{KLE11}.

The algorithm generates a random curve close in shape to the current curve via so--called crankshaft deformations~\cite{MIL03}--- two vertices are chosen randomly along the curve and the vertices between these two are rotated a random angle about the axis defined by the line connecting the two vertices. Here basic adaptations are included to ensure that the end vertices of the chain are translated as often as the interior vertices of the chain. 
This deformation preserves the number of vertices and edge lengths in a straight forward way but may produce a curve which does not model the solute body i.e. does not satisfy the simple--tube property for the radius $r_{t}$.
A second procedure checks if the curve bends too much causing kinks or different sections of the tube overlap, implying the simple--tube property for the radius $r_{t}$ is violated, in which case the curve is discarded and a new random curve is generated. 
This process is comparable to the computation of thickness of discrete knots and links when minimising for ropelength~\cite{RAW03}.

We are interested in comparing thermodynamically favourable solute geometries between different physiological fluid environments of biopolymers.
We achieve this by using explicit formulas of the coupling coefficients $\mathrm{p}$, $\sigma$, $\kappa$  and $\bar{\kappa}$ in terms of the packing fraction of the fluid $\eta$ and the relative size of the solvent $r_{s}$~\cite{HAN06b}.

These formulaic expressions are as follows;

\begin{align}
\mathrm{p} &= \left(\frac{3}{4\pi} \frac{1}{r_{s}^{3}}\right) \eta \frac{1 + \eta + \eta^{2} - \eta^{3}}{(1 - \eta)^{3}} \\ \nonumber
\sigma &= -\left(\frac{3}{4\pi} \frac{1}{r_{s}^{2}}\right)\eta \left( \frac{1 + 2\eta + 8\eta^{2} -5\eta^{3}}{3(1 - \eta)^{3}} + \frac{\ln(1 - \eta)}{3\eta} \right) \\\nonumber
\kappa &= \left(\frac{3}{4\pi} \frac{1}{r_{s}}\right) \eta \left( \frac{4 -10\eta + 20\eta^{2} - 8\eta^{3}}{3(1 - \eta)^{3}} + \frac{4\ln(1 - \eta)}{3\eta} \right) \\ \nonumber
\bar{\kappa} &= \left(\frac{3}{4\pi}\right) \eta \left( \frac{-4 + 11\eta - 13\eta^{2} + 4\eta^{3}}{3(1 - \eta)^{3}} - \frac{4\ln(1 - \eta)}{3\eta} \right) \,.
\end{align}

The code can be accessed from Github~\cite{disSolve}.

\bmhead{Acknowledgements}
Funded by the Deutsche Forschungsgemeinschaft (DFG - German Research Foundation) - Project-ID 195170736 - TRR109. We thank Roland Roth for discussions on the foundations of the morphometric approach and helix self assembly. We thank Roman Unger for assistance with the parallelisation of the code and running support with respect to the computer cluster of the mathematics department at the TU Chemnitz.

\bmhead{Author Contributions}
The research was conceived and designed by RC and ME, computational implementation by RC, discussion and interpretation by RC and ME, and written by RC and ME.

\bmhead{Competing Interests}
The authors declare that there are no competing interests.

\printbibliography

\newpage
\section*{Supplementary Materials}

\subsection*{Investigation of tight helical curves from the perspective of solvation.}

We consider periodic helical curves satisfying a close packing condition of the tube, whose center line is the curve and cross--sectional radius $r_{t}$. 
Such a tube will self--intersect if either the local radius of curvature of the curve is less than $r_{t}$ or a pair of points of the curve belonging to distinct segments are closer than $2r_{t}$.
Mathematically these conditions are given using either the reach functional or the global radius of curvature functional and are important curve properties with regard to the \emph{ropelength} problem in the context of knot theory~\cite{GON99, CAN02, LIT99}. 
Hence, \emph{assuming} self--contact of the tube implies either the local radius of curvature of the centerline curve is $r_{t}$ or at least two points belonging to distinct arcs of the curve a brought to a distance $2r_{t}$.
In assuming either or both of these conditions a collection of helical curves are derived for which the tube comes into self--contact but does not intersect~\cite{PRZ01}.
The collection essentially interpolates between the so--called \emph{optimal} $\alpha$--helix unwinding towards the  $\beta$--sheet and therefore provides an excelled test--case geometry for the investigation of helical tubes within a fluid, organised into the phase diagram Fig.~\ref{fig:perodic_helix_phase_diag}.
Each helix of the family is given by the (scale--invariant) helical radius $\tfrac{R_{h}}{r_{t}}>0$.
If $\frac{R_{h}^{\ast}}{r_{t}} < 0.8689$ the curves are constrained only by the local curvature and are uninteresting with regard to solvation.
The optimal helix (of helical radius $\frac{R_{h}^{\ast}}{r_{t}} = 0.8689$ and coloured red in Fig.~\ref{fig:perodic_helix_phase_diag}) is the special case for which both self--contact conditions are met i.e. the local radius of curvature equals $r_{t}$ and consecutive turns of the helical tube rest on top of one another such that at a pair of points is brought to the distance $2r_{t}$.
If $\frac{R_{h}^{\ast}}{r_{t}} > 0.8689$ consecutive turns of the helical tube rest on top of one another but the helix is free to unwind, the limit $R_{h} \rightarrow \infty$ is an infinite stack of parallel aligned curves separated by $2r_{t}$ representing the $\beta$\emph{--sheet} (blue in Fig.~\ref{fig:perodic_helix_phase_diag}).

\begin{figure}[h!]
\centering
\includegraphics[width=\textwidth]{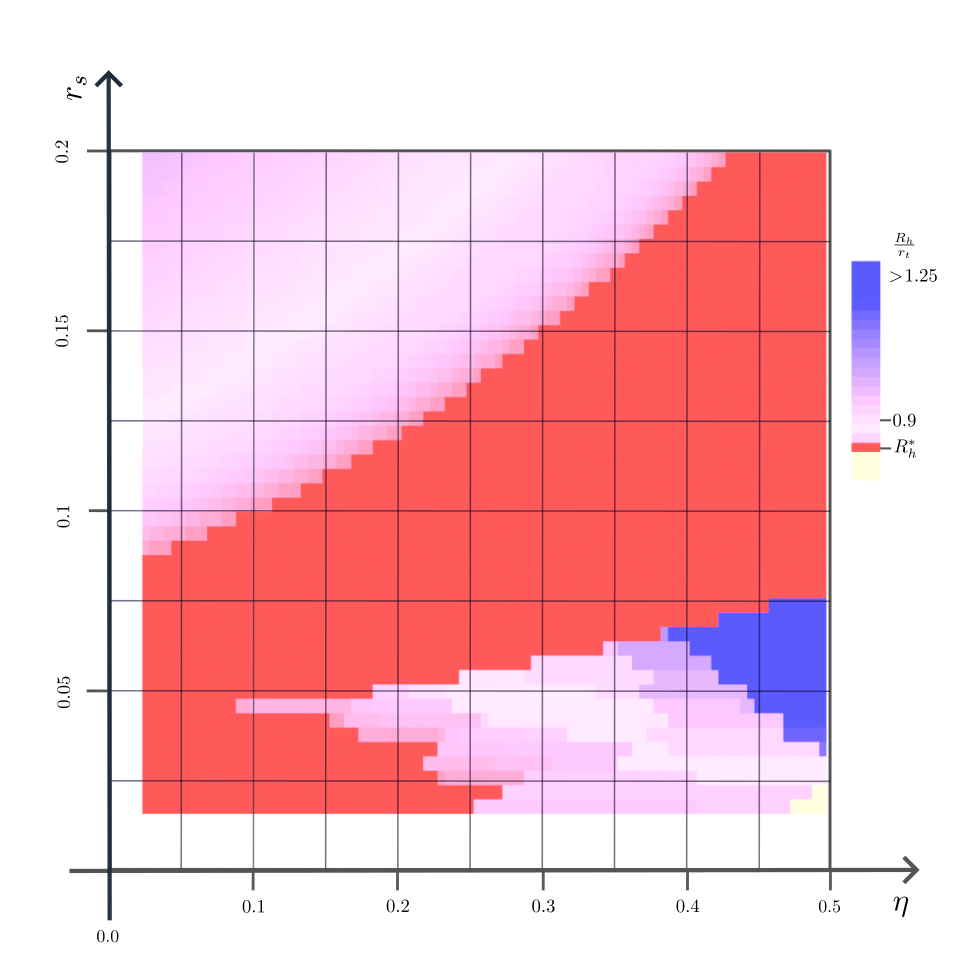}
\caption{Phase diagram of periodic tight helices with edge length $e = 0.25$. 
The diagram is a reproduction of figure 3 of \cite{HAN07} which was computed numerically over a smooth curve.
The diagram shows three main regions, a slightly unwound helix is proferred for larger solvent radii. 
A thick diagonal stripe of fluids favour the optimal helix.
For low solvent radius the helical curve of lowest energy is determined by the packing fraction, with larger packed fractions favouring parallel curves representing the $\beta$\emph{--sheet}.
}
\label{fig:perodic_helix_phase_diag}
\end{figure} 

Shown in Fig.~\ref{fig:perodic_helix_phase_diag} is a phase--diagram demonstrating the thermodynamic favourable closely packed periodic helical tubes.
The helical tube is thought of as immersed in a hard--sphere fluid with packing fraction $\eta$ and relative solvent size $r_{s}$.
The helical radius determining the system of least free energy, as compared to all closely packed helices, colours the corresponding square in the phase diagram.
Each closely packed helical curve is interpolated using an equilateral polygon of edge length $e=0.25$, in comparison to the open equilateral polygonal curves otherwise considered in this work.
Fig.~\ref{fig:perodic_helix_phase_diag} is a reproduction of figure~$3$ of~\cite{HAN07}, differing only in that the original diagram is constructed using smooth curves to define the solute geometry.
Both diagrams are in excellent agreement and show three main regions; a thick band of fluid environments for which the optimal helix is thermodynamically most favourable, for larger solvent radii a slightly unwound helix is preferred, for smaller solvent radii the favourability is determined by the packing fraction $\eta$ with higher density fluids favouring the $\beta$--sheet packing. 
Note the bottom right hand corner is coloured yellow referring to the fully solvated state-- a straight tube devoid of any economic packing-- this is is not seen in the original diagram and is likely a discretisation artifact.

Since we are interested in short open tubes in this work, we compute the same phase diagram for a tight helical open curves of length $\ell = 25$ shown in Fig.~\ref{fig:open_helix_phase_diag_L25}.
The diagram follows the same pattern as the diagram of the periodic helices with the fundamental difference that the $\beta$--sheet structure is not optimal for finite strings.
 
\begin{figure}[h!]
\centering
\includegraphics[width=\textwidth]{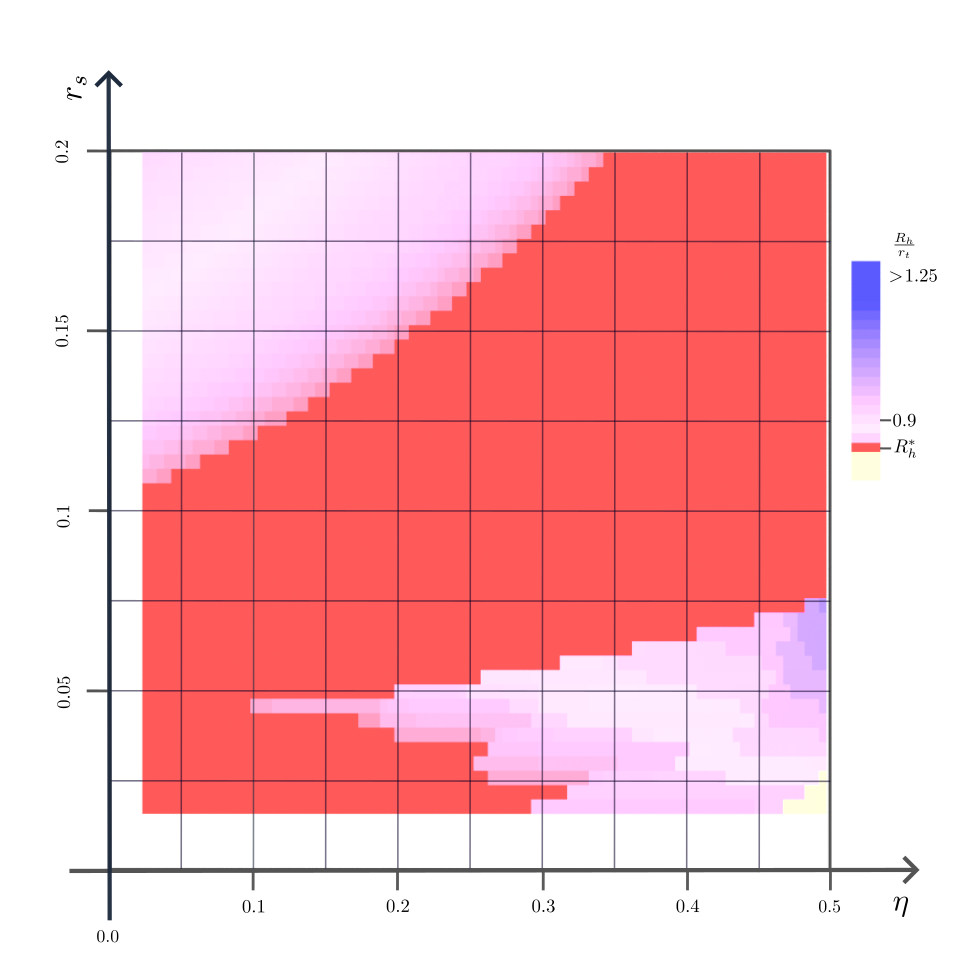}
\caption{Phase diagram of open tight helices with edge length $e = 0.25$ and total length $\ell = 25$. The diagram is different to the diagram of the periodic curves (Fig.~\ref{fig:perodic_helix_phase_diag}) as the $\beta$--sheet structure is not included.
Otherwise the trend between the various regions is quite similar.
}
\label{fig:open_helix_phase_diag_L25}
\end{figure}

\subsection*{The measures of a simple tubular string.}

We consider curves given as open equilateral polygons of $n$ vertices and edge length $e$.
The curve models the shape of the solute body as the union of $r_{t}$--balls, correspondingly the shape of body bounded by the solvent accessible surface ($\mathcal{B}$) as the union of $(r_{t} + r_{s})$--balls, centered at the vertices $v_{i}$ of the polygonal curve.
For a given fluid system the measures (volume, surface area etc.) of the body $\mathcal{B}$ determine the free energy of solvation by~\eqref{eq:free_energy_solvation}.
The solute is fully solvated  if the curve satisfies the simple tube property (condition~\eqref{cond:simple_tube_property}) for the radius $r=(r_{t} + r_{s})$.
This is a geometric condition and does not determine a unique curve shape, referring instead to curves devoid of any economic packing.
The energy of the fully solvated state, $\mathbf{F}_{\mathrm{sol}}^{0}$, is a constant depending \emph{only} upon the geometric parameters of the tubular solute and physical coefficients defining the energy in~\eqref{eq:free_energy_solvation}.
Very generally, if a curve satisfies the simple tube property for the radius $r$ then the measures of the tubular string are constant depending only on the geometric parameters $n$, $e$ and $r$ and not on the specific shape of the curve.
This observation is given in the following proposition.
By this one can see that a curve models the solute if the curve satisfies the simple tube property for the radius $r=r_{t}$ since then the volume, i.e. the bulk amount of solute thought as added to the fluid, remains fixed throughout the simulation.\\

\begin{prop}[measures of a simple tube]\label{prop:embeddedMeasuresOfThreadedBead}
The measures of an open equilateral polygonal curve with edge length $e$ and vertex set $\{v_{1}, \dots, v_{n}\}$ satisfying condition~\eqref{cond:simple_tube_property} for ball radius $r$ are given by the following formulas: 
	\begin{equation*}
	\begin{array}{rcl}
		\mathrm{V}\left( \bigcup\limits_{i=1}^{n} \mathrm{B}_{r}(v_{i})\right) &= &\pi (n-1)\left(r^{2}e - \frac{e^{3}}{12}\right) + \frac{4 \pi}{3} r^{3}\\[16pt]
	  	\mathrm{A}\left( \bigcup\limits_{i=1}^{n} \mathrm{B}_{r}(v_{i})\right) &= &2(n - 1)\pi r e + 4\pi r^{2}\\[16pt]
		\mathrm{C}\left( \bigcup\limits_{i=1}^{n} \mathrm{B}_{r}(v_{i})\right) &= &2(n - 1)\pi\left(e - \sqrt{r^{2} - \left(\frac{e}{2}\right)^{2}} \left(\frac{\pi}{2} - \arccos\left(\frac{e}{2r}\right)\right) \right) + 4\pi r \\[16pt]
		\mathrm{X}\left( \bigcup\limits_{i=1}^{n} \mathrm{B}_{r}(v_{i})\right) &= & 4\pi \,.
	\end{array}
	\end{equation*}
\end{prop}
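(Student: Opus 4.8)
The plan is to collapse everything to a single two--ball computation. All four measures $\mathrm{V}$, $\mathrm{A}$, $\mathrm{C}$, $\mathrm{X}$ are additive valuations --- up to normalisation they are the intrinsic volumes of the body --- so inclusion--exclusion is valid on the convex ring generated by the balls $\mathrm{B}_r(v_i)$. Write $U_m=\bigcup_{i=1}^{m}\mathrm{B}_r(v_i)$ and build the union one ball at a time; throughout, $e<2r$ so that consecutive balls genuinely overlap (the relevant regime). The one place condition~\eqref{cond:simple_tube_property} enters is the identity
$$U_{m-1}\cap\mathrm{B}_r(v_m)=\mathrm{B}_r(v_{m-1})\cap\mathrm{B}_r(v_m)\,,$$
valid for every $m$: for $i\le m-2$ the simple--tube property for the pair $(i,m)$, read off at the index $k=m-1$, gives $\mathrm{B}_r(v_i)\cap\mathrm{B}_r(v_m)\subseteq\mathrm{B}_r(v_{m-1})$, so no ball other than the predecessor $\mathrm{B}_r(v_{m-1})$ contributes any new overlap when $\mathrm{B}_r(v_m)$ is adjoined. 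Inclusion--exclusion then gives, for each measure $\mathrm{M}$, the recursion $\mathrm{M}(U_m)=\mathrm{M}(U_{m-1})+\mathrm{M}(\mathrm{B}_r)-\mathrm{M}(L)$, where $L=\mathrm{B}_r(v_{m-1})\cap\mathrm{B}_r(v_m)$ is a symmetric lens whose measures depend only on $r$ and $e=|v_{m-1}-v_m|$ by rigid--motion invariance, \emph{not} on the embedding. Telescoping from $U_1=\mathrm{B}_r(v_1)$ yields
$$\mathrm{M}(U_n)=\mathrm{M}(\mathrm{B}_r)+(n-1)\bigl(\mathrm{M}(\mathrm{B}_r)-\mathrm{M}(L)\bigr)\,.$$

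It then remains to evaluate six numbers. For a ball the values are standard: $\mathrm{V}(\mathrm{B}_r)=\tfrac{4}{3}\pi r^3$, $\mathrm{A}(\mathrm{B}_r)=4\pi r^2$, $\mathrm{C}(\mathrm{B}_r)=4\pi r$, $\mathrm{X}(\mathrm{B}_r)=4\pi$. For the lens $L$ I would cut along its mid--plane into two congruent spherical caps of height $h=r-\tfrac{e}{2}$ from a sphere of radius $r$, meeting along a rim circle of radius $\rho:=\sqrt{r^2-(e/2)^2}$. The elementary cap formulas (volume $\tfrac{\pi h^2}{3}(3r-h)$, area $2\pi r h$) give, after simplification, $\mathrm{V}(L)=\tfrac{4}{3}\pi r^3-\pi r^2 e+\tfrac{\pi e^3}{12}$ and $\mathrm{A}(L)=4\pi r^2-2\pi r e$, while $\mathrm{X}(L)=4\pi$ since $L$ is convex, hence contractible. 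Substituting into the telescoped identity reproduces the stated formulas for $\mathrm{V}$ and $\mathrm{A}$ at once, and for $\mathrm{X}$ gives $4\pi+(n-1)(4\pi-4\pi)=4\pi$, consistent with $U_n$ being contractible.

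The delicate term --- and the main obstacle --- is $\mathrm{C}$, the integrated mean curvature, because $\partial L$ is not smooth: the two caps meet in a circular ridge. I would use the description of $\mathrm{C}$ on the convex ring coming from the Steiner formula $\mathrm{vol}(K\oplus tB)=\mathrm{V}(K)+\mathrm{A}(K)t+\mathrm{C}(K)t^2+\tfrac{4}{3}\pi\chi(K)t^3$, under which a smooth boundary piece contributes $\int\tfrac{\kappa_1+\kappa_2}{2}\,dA$ and a convex edge of length $\ell$ and exterior dihedral angle $\alpha$ contributes $\tfrac{1}{2}\ell\alpha$ (both normalisations are easily checked on, e.g., a cube). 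The two caps contribute $2\cdot\tfrac{1}{r}\cdot 2\pi r h=4\pi r-2\pi e$; for the ridge one computes the exterior angle from the two outward unit normals of the caps along the rim, obtaining $\cos\alpha=1-\tfrac{e^2}{2r^2}=\cos\!\bigl(2\arcsin\tfrac{e}{2r}\bigr)$, hence $\alpha=2\arcsin\tfrac{e}{2r}=2\bigl(\tfrac{\pi}{2}-\arccos\tfrac{e}{2r}\bigr)$, so the ridge of length $2\pi\rho$ contributes $2\pi\rho\bigl(\tfrac{\pi}{2}-\arccos\tfrac{e}{2r}\bigr)$. Thus $\mathrm{C}(L)=4\pi r-2\pi e+2\pi\rho\bigl(\tfrac{\pi}{2}-\arccos\tfrac{e}{2r}\bigr)$, and feeding this into the telescoped identity gives exactly the claimed expression for $\mathrm{C}(U_n)$. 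As a check on the sign bookkeeping in inclusion--exclusion I would recompute $\mathrm{C}(\mathrm{B}_r(v_1)\cup\mathrm{B}_r(v_2))$ directly --- two exposed caps plus a \emph{reflex} ridge, whose angle now enters with the opposite sign --- and confirm it equals $2\cdot 4\pi r-\mathrm{C}(L)$.
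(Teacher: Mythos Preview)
Your argument is correct. The telescoping identity $\mathrm{M}(U_n)=\mathrm{M}(\mathrm{B}_r)+(n-1)\bigl(\mathrm{M}(\mathrm{B}_r)-\mathrm{M}(L)\bigr)$ follows cleanly from the simple--tube property exactly as you say, and your lens computations (including the ridge term for $\mathrm{C}$) check out and reproduce the stated formulas.

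The paper proceeds differently: it intersects each ball $\mathrm{B}_r(v_i)$ with its Voronoi cell to obtain a disjoint decomposition of the union, so that an interior vertex contributes a ball with two caps sliced off and an end vertex a ball with one cap sliced off, and then evaluates each measure piecewise (for $\mathrm{C}$ this means integrating over the exposed spherical zones and adding the concave--crease contribution along the $(n-1)$ intersection circles directly). Your route is more valuation--theoretic: everything collapses to one two--ball computation, and the same telescoped identity handles all four measures uniformly without ever describing the exposed boundary. The paper's route is more pictorial --- one literally sees the surface of the tube assembled from zones and circles --- and makes the sign of the crease term in $\mathrm{C}$ transparent from the outset, whereas you recover it via inclusion--exclusion (your closing consistency check on $\mathrm{B}_r(v_1)\cup\mathrm{B}_r(v_2)$ is exactly the right sanity test). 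Either argument is complete; yours is arguably tidier, the paper's arguably more geometric.
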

\begin{proof}
    The Voronoi diagram intersected with the ball $\mathrm{B}_{r}(v_{i})$ for each $i$ is a decomposition of the union of balls which, by~\eqref{cond:simple_tube_property}, is disjoint up to a planar face interior to the union of balls. 
    A set belonging to such a decomposition is coloured blue in Fig~\ref{fig:linear_chain_solute}.
    The volume of this region (blue) corresponding to the string interior vertices $v_{2}, \dots, v_{n-1}$ is the volume of a ball without the volume enclosed by two spherical caps and the bounding planes (shaded grey); that is
    \begin{equation*}
      \frac{4\pi}{3} - \frac{2\pi}{3}\left(r - \tfrac{e}{2}\right)^{2}\left(2r - \tfrac{e}{2}\right) = \pi(r^{2} - \frac{e^{3}}{12}) \,.
    \end{equation*}
    The tubular string is the concatenation of $(n - 2)$ such regions (blue).
    The volume of the regions corresponding to the $i=1$ and $i=n$ vertices is the volume of the ball without the volume enclosed by one spherical cap and a bounding plane; so that the addition of both of these volumes is 
    \begin{equation*}
      2\frac{4\pi}{3} - \frac{2\pi}{3}\left(r - \tfrac{e}{2}\right)^{2}\left(2r - \tfrac{e}{2}\right) = \frac{4\pi}{3} + \pi(r^{2} - \frac{e^{3}}{12}) \,.
    \end{equation*}
    In sum this gives the computed volume as in the proposition.

    Completely analogously the (exposed) surface area of the region (blue) corresponding to the string interior vertices $v_{2}, \dots, v_{n-1}$ is the area of the ball without the area of two spherical caps; that is
     \begin{equation*}
      4\pi r^{2} - 4\pi r\left(r - \tfrac{e}{2} \right) = 2\pi r e \,.
    \end{equation*}
    The total surface area of the tubular string is the area of $(n - 2)$ such regions and the surface area of both regions corresponding the $i=1$ and $i=n$ vertices, which sum as 
    \begin{equation*}
      8\pi r^{2} - 4\pi r\left(r - \tfrac{e}{2} \right) = 4\pi r^{2} + 2\pi r e \,.
    \end{equation*}
    In total this gives the computed surface area as in the proposition.

    The integrated mean curvature measure has a term which is computed over the regular surface of the the union of balls and a term computed over the intersection curves between adjacent balls.
    Since both principal curvatures are equal to $\frac{1}{r}$ the contribution to the integrated mean curvature from integration over the regular surface is
    \begin{equation*}
        4\pi r + (n - 1)2\pi e \,. 
    \end{equation*}
    The contribution to the integrated mean curvature measure computed over the intersection curves between adjacent balls is (minus) half the angle between the normal vectors to intersecting surfaces, $\theta$, integrated along the intersection curve $\gamma$; that is
    \begin{equation*}
        - \oint\limits_{\gamma} \frac{\theta}{2} \,.
    \end{equation*}
    By symmetry the angle $\theta$ is constant along the intersection curve $\gamma$ and equal to
    \begin{equation}
        \pi - 2\arccos\left( \frac{e}{2r} \right) \,.
    \end{equation}
    The length of the intersection curve $\gamma$ between two arbitrary adjacent balls is
    \begin{equation*}
        2\pi\sqrt{r^{2} - (\tfrac{e}{2})^{2}} \,.
    \end{equation*}
    There are $(n-1)$ such intersection curves contained in the boundary of the tubular string.
    In total these contributions give the computed mean curvature term as in the proposition.

    The integrated Gaussian curvature is computed via the Euler characteristic and observing that condition~\eqref{cond:simple_tube_property} ensures that the tubular string has the topology of a ball.
\end{proof}

\begin{figure}[h!]
\centering
\includegraphics[width=0.9\textwidth]{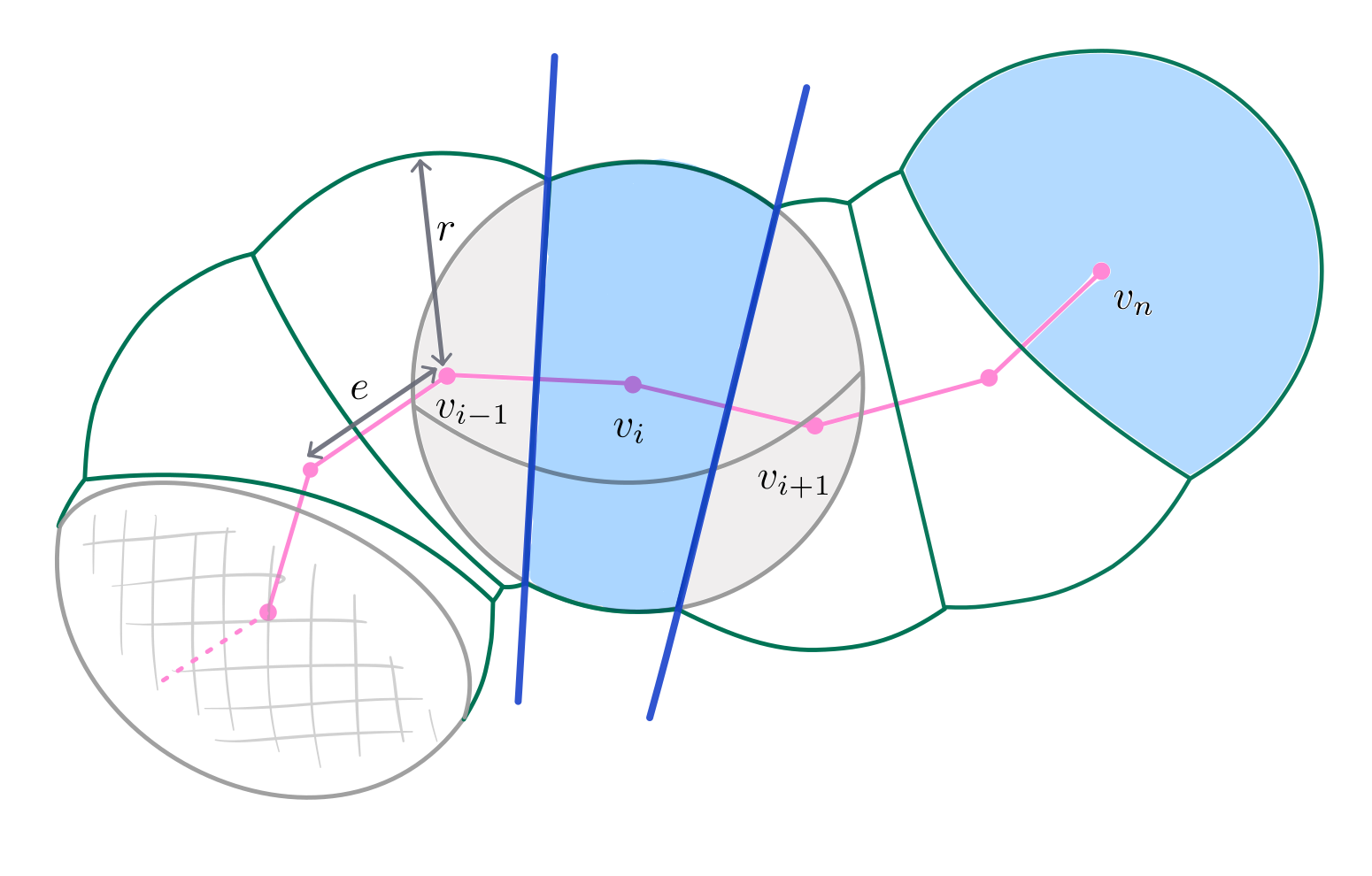}
\caption{The figure depicts part of the body $\mathcal{B}$, here given as the union of $r$ balls centered along the vertices of an equilateral polygonal curve with $n$ vertices and edge length $e$. 
In this configuration, the curve satisfies the simple tube property for the radius $r$, condition~\eqref{cond:simple_tube_property}.
To compute the measures of the union of balls a disjoint decomposition of $\mathcal{B}$ is used.
Such a disjoint decomposition is found by intersecting the $i$\textsuperscript{th} region of the the Voronoi diagram of the vertex set $\{v_{1}, \dots, v_{n}\}$ with the ball $\mathrm{B}_{r}(v_{i})$.
Shaded blue is an example of such a set for the interior vertices $\{v_{2}, \dots, v_{n-1}\}$ and an example of such a set for end vertices $v_{1}, v_{n}$.
Each such set is the ball $\mathrm{B}_{r}(v_{i})$ contained in the half space defined by the perpendicular bisecting planes (depicted as dark blue lines) of the incident edges to the vertex.
}
\label{fig:linear_chain_solute}
\end{figure}

\end{document}